\documentclass[12pt]{iopart}

%Uncomment next line if AMS fonts required
\usepackage{iopams}  
\bibliographystyle{iopart-num}

\usepackage{graphicx}	% Including figure files
\usepackage{amssymb}	% Extra maths symbols
\usepackage{amsthm}
\usepackage{tikz}
\usetikzlibrary{shapes.geometric, arrows}
\usepackage{tkz-euclide}

% My macros:

\def\d{\hspace{0.15em} \textup{d}}
\def\B{\mathbf{B}}

\def\A{\mathbf{A}}
\def\X{\mathbf{X}}
\def\Y{\mathbf{Y}}
\def\Q{\mathbf{Q}}
\def\d{\,\textup{d}}

\def\n{\mathbf{n}}
\def\t{\mathbf{t}}

\def\u{\mathbf{u}}

\def\x{\mathbf{x}}
\def\0{\mathbf{0}}

\def\brho{\boldsymbol\rho}

\newtheorem{theorem}{Theorem}

\newtheorem{definition}[theorem]{Definition}

\newtheorem{remark}[theorem]{Remark}
\def\curl{{\rm curl}}
\def\div{{\rm div}}

\begin{document}

\title[Relative helicity in multiply connected domains]{Relative magnetic helicity in multiply connected domains}

\author{David MacTaggart}

\address{School of Mathematics and Statistics, University of Glasgow, G12 8QQ, UK}
\ead{david.mactaggart@glasgow.ac.uk}
\author{Alberto Valli}
\address{Department of Mathematics, University of Trento, Povo, Italy}
\vspace{10pt}
\begin{indented}
\item[]July 2023
\end{indented}

\begin{abstract}
Magnetic helicity is a conserved quantity of ideal magnetohydrodynamics (MHD) that is related to the topology of the magnetic field, and is widely studied in both laboratory and astrophysical plasmas. When the magnetic field has a non-trivial normal component on the boundary of the domain, the classical definition of helicity must be replaced by \emph{relative magnetic helicity}. The purpose of this work is to review the various definitions of relative helicity and to show that they have a common origin - a general definition of relative helicity in multiply connected domains. We show that this general definition is both gauge-invariant and is conserved in time under ideal MHD, subject only to closed and line-tied boundary conditions. Other, more specific, formulae for relative helicity, that are used frequently in the literature, are shown to follow from the general expression by imposing extra conditions on the magnetic field or its vector potential. 
\end{abstract}

%
% Uncomment for keywords
%\vspace{2pc}
%\noindent{\it Keywords}: XXXXXX, YYYYYYYY, ZZZZZZZZZ
%
% Uncomment for Submitted to journal title message
%\submitto{\JPA}
%
% Uncomment if a separate title page is required
%\maketitle
% 
% For two-column output uncomment the next line and choose [10pt] rather than [12pt] in the \documentclass declaration
%\ioptwocol
%

\section{Introduction}
Magnetic helicity is an invariant of ideal magnetohydrodynamics (MHD) that has found many applications in astrophysical and laboratory plasmas. After its discovery in the late 1950s by Woltjer \cite{woltjer1958}, it was quickly realized that magnetic helicity has a close connection to the linkage of the magnetic field \cite{moffatt1969} (see \cite{moreau1961} for a hydrodynamical version) and can, therefore, be considered a topological invariant of MHD. This original form of magnetic helicity is referred to as \emph{classical magnetic helicity}, and is concerned specifically with closed magnetic fields, i.e. those that are bounded by magnetic flux surfaces. 

The definition of classical magnetic helicity is as follows. A magnetic field vector $\B$\footnote{Technically, $\B$ is the magnetic induction, but it is common practice in MHD to refer to it as the magnetic field.} is divergence-free ($\div\B=0)$ and can be written in terms of a vector potential $\A$, $\B=\curl\A$. The classical form of magnetic helicity is written as
\begin{equation}\label{hel}
    H = \int_\Omega \A\cdot\B\, \d^3x,
\end{equation}
where $\B\cdot\n=0$ on $\partial\Omega$ and $\n$ is  outward normal unit vector on $\partial\Omega$. It is a standard result to show that $H$ is \emph{gauge-invariant}, i.e. independent of the choice of $\A$, if $\Omega$ is a simply connected domain (e.g. \cite{berger1984}).

Equation (\ref{hel}) has been used for both simply and multiply connected domains. For multiply connected domains, however, equation (\ref{hel}) is not gauge-invariant and its application in such domains requires the selection of a suitable gauge. A typical choice is the so-called Coulomb gauge, which has led to a deeper understanding of the relationship between helicity and the geometry and topology of magnetic fields (e.g. \cite{moffatt1969,moffatt1992}). 

Equation (\ref{hel}) was recently extended by MacTaggart and Valli \cite{mactaggart2019} to provide a gauge-invariant definition of magnetic helicity for multiply connected domains. The expression can be written as
\begin{equation}\label{hel_gen}
    H = \int_\Omega \A\cdot\B\, \d^3x -\sum_{i=1}^g\left(\oint_{\gamma_i}\A\cdot\t_i\,\d x\right)\left(\int_{\Sigma_i}\B\cdot\n_{\Sigma_i}\,\d^2x\right),
\end{equation}
where $g$ is the genus of $\Omega$, the $\gamma_i$ are closed paths around non-bounding surfaces in $\Omega$, the $\t_i$ are unit tangent vectors on the $\gamma_i$,  the $\Sigma_i$ are cut surfaces internal to $\Omega$ and the $\n_{\Sigma_i}$ are the unit normal vectors of the cut surfaces. More details of these geometric features will be reviewed later.

Equation (\ref{hel_gen}) succinctly generalizes the definition of magnetic helicity that is often given in the literature, where either
\[
\oint_{\gamma_i}\A\cdot\t_i\,\d x := 0 \quad {\rm or} \quad \int_{\Sigma_i}\B\cdot\n_{\Sigma_i}\,\d^2x :=0, \quad {\rm for}\,\, i=1,\dots,g.
\]
The former condition is often justified as removing the influence of magnetic field outside the domain linking with that inside (e.g. \cite{browning2014}). The latter condition imposes zero magnetic flux and this condition is too strong for many applications.

Everything we have described until now has been for magnetic fields that are everywhere tangent to the boundary, i.e. $\B\cdot\n = 0$ on $\partial\Omega$ (or $|\B|$ tends to zero at a suitable rate if $|\x|\rightarrow\infty$). The natural extension is to consider $\B\cdot\n=f\ne0$ on $\partial\Omega$\footnote{We also assume that $\int_{\Gamma_r} f\,\d^2x=0$ on all the connected components $\Gamma_r$ of the boundary in order to admit a vector potential for $\B$.}. Indeed, this situation is often more practical, and has received particularly close attention in solar physics (see \cite{Pariat2020,mactaggart2021} and references within). The main issue with having non-trivial normal components of the magnetic field on the boundary is that equations (\ref{hel}) or (\ref{hel_gen}) are no longer gauge-invariant. To resolve this, a new gauge-invariant form of helicity was introduced in the mid 1980s, called \emph{relative magnetic helicity}, with varying definitions proposed by different authors \cite{jensen1984,berger1984,finn1985}. 

It was recognized that in order to produce a gauge-invariant helicity with non-trival normal components on the boundary, a comparison of two different magnetic fields with the same normal boundary conditions is required. Here, we follow the thread of the seminal work of Berger and Field \cite{berger1984}, who describe a way to construct relative helicity in a simply connected domain.

\begin{figure}[h!]
    \centering
    \begin{tikzpicture}

        \draw [red,thick] plot [smooth,samples=200, tension=1] coordinates { 
   (-4,0) (-4.5,3) (-3,3.4) (-3,0)};

\filldraw [white] (-2.84,1.32) circle (1.9pt);

      \draw [red,thick] plot [smooth,samples=200, tension=1] coordinates { 
    (-2,0) (-2.5,1) (-3.5,2) (-2.5,3)  (-1,0)};

    % \filldraw [white] (-2.76,1.29) circle (1.9pt);

    % \filldraw [white] (-2.94,1.36) circle (1.9pt);

    % \filldraw [white] (-2.94,2.36) circle (2pt);

    \filldraw [white] (-2.861,3.18) circle (1.pt);

     \filldraw [white] (-2.968,3.16) circle (1.pt);
   
   \draw [blue,thick] plot [smooth,samples=200, tension=1] coordinates { 
   (-4,0) (-3.4,-1) (-3,0) };

     \draw [blue,thick] plot [smooth,samples=200, tension=1] coordinates { 
   (-2,0) (-1.4,-1) (-0.7,-1.3) (-1,0) };
%%%%%%%%%%%%%%%%%%%%

\draw[thick,dashed]{} (0,-2) -- (0,0);

%%%%%%%%%%%%%%%%%%%%

   \draw [blue,thick] plot [smooth,samples=200, tension=1] coordinates { 
   (-4+5,0) (-3.4+5,-1) (-3+5,0) };

     \draw [blue,thick] plot [smooth,samples=200, tension=1] coordinates { 
   (-2+5,0) (-1.4+5,-1) (-0.7+5,-1.3) (-1+5,0) };

     \draw [red,thick] plot [smooth,samples=200, tension=1] coordinates { 
   (-4+5,0) (-4.5+5,3) (-3+5,3.4) (-3+5,0) };

      \draw [red,thick] plot [smooth,samples=200, tension=1] coordinates { 
   (-2+5,0) (-2.2+5,1) (-1.7+5,1.3) (-1+5,0) };

     \node at (-2.5,4) [anchor=north]{$\B$};
     \node at (-2.5+5,4) [anchor=north]{$\B'$};

    \node at (-2.5,-1) [anchor=north]{$\widetilde{\B}$};
    \node at (2.5,-1) [anchor=north]{$\widetilde{\B}$};

    \node at (-4.7,0.5) [anchor=north]{$\Omega$};

    \node at (-4.7+5,0.5) [anchor=north]{$\Omega$};

     \draw[thick]{} (-5,0) -- (5,0);
     \draw[thick]{} (-5,0) -- (-5,4);
     \draw[thick]{} (-5,4) -- (5,4);
     \draw[thick]{} (5,4) -- (5,0);
     \draw[thick]{} (0,0) -- (0,4);
   
    \end{tikzpicture}
    \caption{A representation of the fields used in the construction of relative helicity. The domain $\Omega$ is shown twice - on the left containing the magnetic field $\B$, and on the right containing the reference magnetic field $\B'$. Both fields are closed outside $\Omega$ by the same $\widetilde{\B}$.}
    \label{diag_rel_hel}
\end{figure}
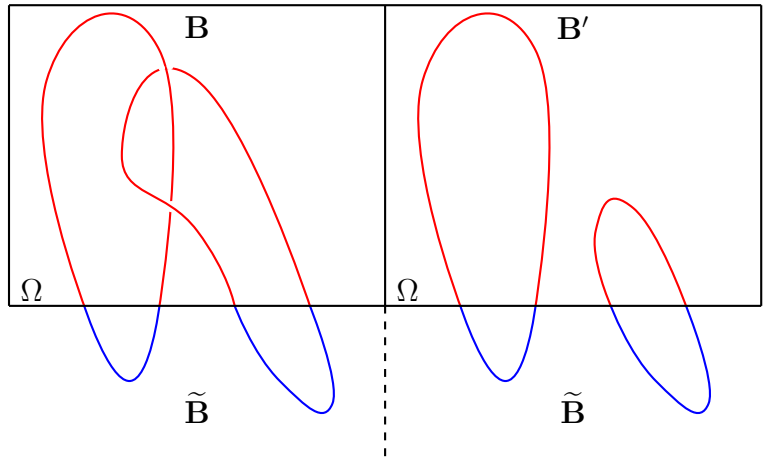

Consider a magnetic field $\B$ in a simply connected $\Omega$, as shown on the left-hand side of Figure \ref{diag_rel_hel}. The first step is to \lq close' the magnetic field (shown in red) by extending with another divergence-free field $\widetilde{\B}$ (shown in blue) outside of $\Omega$ in such a way that the extension matches the boundary components of $\B$ on $\partial\Omega$ ($\widetilde{\B}\cdot\n = 0$ on a \lq far' boundary). In this way $H(\B,\widetilde{\B})$ can be determined in the classical way via equation (\ref{hel}). However, this quantity is not useful by itself as it now extends outside the domain of interest and, further, it is not unique - $\widetilde{\B}$ is arbitrary up to boundary conditions. The solution lies in considering a second divergence-free \emph{reference} field in $\Omega$, $\B'$ say, that satisfies $\B'\cdot\n=\B\cdot\n = f$ on $\partial\Omega$. This is shown on the right-hand side of Figure \ref{diag_rel_hel}. Then, by adding the same extension $\widetilde{\B}$ to $\B'$, it can be shown that
\begin{equation}\label{hel_diff1}
    H^R_{1} = H(\B,\widetilde{\B})-H(\B',\widetilde{\B}),
\end{equation}
is gauge-invariant subject to the  condition that $\n\times\A=\n\times\A'$ on $\partial\Omega$. Further, equation (\ref{hel_diff1}) can be reduced to the simple form
\begin{equation}\label{hel_diff2}
    H^R_{1} = \int_\Omega(\A\cdot\B-\A'\cdot\B')\,\d^3x,
\end{equation}
which, importantly, is independent of the extension $\widetilde{\B}$. It can also be shown that $H^R_{1}$, as written above, is an invariant of ideal MHD for fixed $\Omega$ and closed and line-tied boundary conditions (e.g. \cite{laurence1991}).

Another definition of relative helicity that is popular in the literature is the so-called \emph{Finn-Antonsen formula} \cite{finn1985},
\begin{equation}\label{fa}
    H^R_2 = \int_\Omega(\A+\A')\cdot(\B-\B')\,\d^3 x.
\end{equation}
For simply connected $\Omega$, this expression of relative helicity is gauge-invariant for both $\A$ and $\A'$ independently, and without any condition on the tangential traces of the vector potentials. Although Finn and Antonsen \cite{finn1985} do not provide a clear geometrical construction for equation (\ref{fa}), as Berger and Field \cite{berger1984} did for equation (\ref{hel_diff2}), it is not difficult to show that $H^R_2$ becomes $H^R_1$ upon imposing $\n\times\A=\n\times\A'$ on $\partial\Omega$, namely

\begin{eqnarray*}
    H^R_2 &= \int_\Omega(\A\cdot\B-\A'\cdot\B')\,\d^3x + \int_\Omega(\A'\cdot\B-\A\cdot\B')\,\d^3x \\
    &= H^R_1 + \int_\Omega (\A'\cdot\curl\A - \A\cdot\curl\A')\,\d^3x\\
    & = H^R_1 + \int_{\partial\Omega}\A'\cdot(\n\times\A)\,\d^2x \\
    &= H^R_1 \,\, {\rm with} \,\, \n\times\A=\n\times\A' \,\, {\rm on} \,\, \partial\Omega. 
\end{eqnarray*}
Finn and Antonsen \cite{finn1985} also claim that $H_2^R$ is suitable for multiply connected domains subject to the conditions that the fluxes of $\B$ and $\B'$ are equal both inside and outside $\Omega$. They focus their discussion on toroidal domains, but provide no rigorous proof of their claims for $H_2^R$ in multiply connected domains.

The purpose of this work is to provide a general definition of relative helicity in multiply connected domains, from which the other formulae, including those for $H_1^R$ and $H_2^R$, are derived as special cases. In finding a suitable definition, we proceed with the following criteria: (1) the general definition should reduce to equation (\ref{hel_gen}) in the case where $\B\cdot\n=0$ everywhere on $\partial\Omega$, (2) the general definition must be gauge-invariant without extra conditions, such as those for $H^R_1$ and $H^R_2$, and must be conserved in time, for fixed $\Omega$, under ideal MHD, (3) the general definition should reduce to specific cases $H^R_1$ and $H^R_2$ by imposing extra conditions on either the magnetic field or its vector potential.

The rest of the work proceeds as follows. First, we present a summary of the geometrical setup and some fundamental results necessary for calculations in multiply connected domains (topological connections to these basic results are expanded upon in \ref{app_A}). This is followed by the main result, the general definition of relative helicity in multiply connected domains, for which we prove gauge-invariance and conservation in time under ideal MHD. We conclude the paper by remarking on how $H^R_1$ and $H^R_2$ follow from the general definition. 

\section{Preliminary results}

\subsection{Geometrical setup}

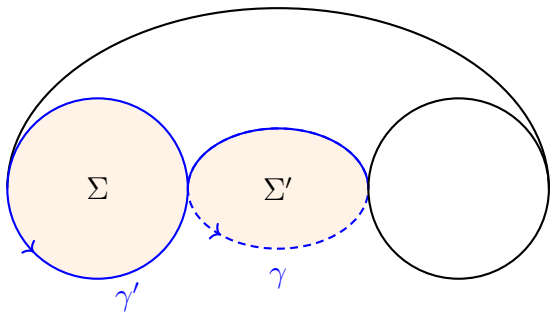
\begin{figure}[h!]
\centering
{\begin{tikzpicture}[scale=0.8]
\draw [fill=orange, opacity=0.1] (-3,-6) ellipse (1.5cm and 1.5cm);

%\draw [fill=white] (-3,-6) ellipse (0.5cm and 0.5cm);

%\draw[black, thick]  (-3.5,-6) arc(-180:-220:3.5cm and 2.3cm);
%\draw[black, thick]  (3.5,-6) arc(-360:-320:3.5cm and 2.3cm);
%\draw[black, thick,dashed]  (-3.5,-6) arc(-180:-360:3.5cm and 2.3cm);

%\draw[black, thick,dashed]  (-2.5,-6) arc(-180:-360:2.5cm and 1.6cm);
%\draw[black, thick]  (-2.5,-6) arc(-180:-220:2.5cm and 1.6cm);
%\draw[black, thick]  (2.5,-6) arc(-360:-320:2.5cm and 1.6cm);

\draw[black, thick]  (-4.5,-6) arc(-180:-360:4.5cm and 3cm);

\draw[draw=none,fill=orange,opacity=0.1]  (-1.5,-6) arc(-180:180:1.5cm and 1cm);
%\draw [fill=white,draw=none] (0,-6) ellipse (1.5cm and 1cm);

 % \draw[
 %        blue,thick,decoration={markings, mark=at position 0.625 with {\arrow{<}}},
 %        postaction={decorate}
 %        ]
 %        (-3,-6) ellipse (0.5cm and 0.5cm);
        
 \draw[
        blue,thick,densely dashed,decoration={markings, mark=at position 0.625 with {\arrow{>}}},
        postaction={decorate}
        ]
        (0,-6) ellipse (1.5cm and 1cm);

% \draw[
%         blue,thick,densely dotted,decoration={markings, mark=at position 0.625 with {\arrow{>}}},
%         postaction={decorate}
%         ]
%         (0,-6) ellipse (2.5cm and 1.6cm);

% \draw[white, very thick]  (-2.5,-6) arc(-180:-360:2.5cm and 1.6cm);
%\draw[blue, thick,dashed]  (-2.5,-6) arc(-180:-360:2.5cm and 1.6cm);
%\draw[blue, thick]  (-2.5,-6) arc(-180:-221:2.5cm and 1.6cm);
%\draw[blue, thick]  (2.5,-6) arc(-360:-320:2.5cm and 1.6cm);
\draw[blue, thick]  (-1.5,-6) arc(-180:-360:1.5cm and 1cm);

\draw[
        blue,thick,decoration={markings, mark=at position 0.625 with {\arrow{>}}},
        postaction={decorate}
        ]
        (-3,-6) ellipse (1.5cm and 1.5cm);
\draw [thick] (3,-6) ellipse (1.5cm and 1.5cm);
%\draw [thick] (3,-6) ellipse (0.5cm and 0.5cm);

\node at (0,-6) {$\Sigma'$};
\node at (-3,-6) {$\Sigma$};

\node[blue] at (-2.5,-7.8) {$\gamma'$};

\node[blue] at (0,-7.5) {$\gamma$};

\end{tikzpicture}}
\caption{A torus $\Omega$, shown cut in half to reveal the cross-section. The non-bounding cycles of $\Omega$, $\gamma$, and $\Omega'$, $\gamma'$, are shown as blue curves. The associated cutting surfaces, $\Sigma'$ and $\Sigma$, are indicated by orange surfaces. \label{torus}}
\end{figure}   
\unskip

Let $\Omega$ be a multiply connected domain that is a bounded open connected set with Lipschitz boundary $\partial\Omega$ and unit outer normal $\n$. Let $g>0$ be the first Betti number (or genus) of $\Omega$. This being the case, the first Betti number of $\partial\Omega$ is $2g$. We can consider $2g$ non-bounding cycles $\{\gamma_i\}^g_{i=1}\cup\{\gamma'_i\}^g_{i=1}$, that represent the generators of the first homology group of $\partial\Omega$. The $\{\gamma_i\}^g_{i=1}$ represent the generators of the first homology group of $\overline{\Omega}$. In Figure \ref{torus}, there is only $\gamma$ ($\equiv\gamma_1$), which encircles the hole of the torus. The tangent vector on $\gamma_i$ is denoted $\t_i$. Analogously, the $\{\gamma'_i\}^g_{i=1}$ represent the generators of the first homology group of $\overline{\Omega}'$, where $\Omega' = B\setminus\overline{\Omega}$, and $B$ is an open ball containing $\overline{\Omega}$. In Figure \ref{torus}, $\gamma'$ encircles the cross-section of the torus.  The tangent vector on $\gamma'_i$ is denoted $\t'_i$. 

There exist $g$ \lq cutting surfaces' $\{\Sigma_i\}_{i=1}^g$ in $\Omega$. These are connected orientable Lipschitz surfaces satisfying $\Sigma_i\subset\Omega$. Each cutting surface in $\Omega$ has boundary $\partial\Sigma_i=\gamma_i'$. In Figure \ref{torus}, $\Sigma$ is the cutting surface of the cross-section of the torus. Each $\overline{\Sigma}_i$ \lq cuts' the corresponding $\gamma_i$ cycle at only one point. The unit normal vector on a cutting surface is denoted $\n_{\Sigma_i}$, oriented as $\gamma_i$. With this choice just described, it holds that $\t_i'=\n_{\Sigma_i}\times\n$. Analogously, $\{\Sigma'_i\}_{i=1}^g$ are the cutting surfaces in $\Omega'$, with $\partial\Sigma_i'=\gamma_i$. In Figure \ref{torus}, $\Sigma'$ is the surface that spans the hole of the torus.

\subsection{Hodge decomposition}\label{sec_hodge}
We will make extensive use of the Hodge decomposition in multiply connected domains (e.g. \cite{blank1957,cantarella2002,alonso2010,ghiloni2010,alonso2018}). This states that any vector field $\Q\in L^2(\Omega)^3$ can be decomposed as
    \[
    \Q = \curl \A + \nabla\phi +\brho,
    \]
    where $\A$ is a vector field, $\phi$ is a scalar function and $\brho$ is a vector field in the space of Neumann harmonic fields, defined as
    \[
    \mathcal{H}(\Omega) = \{\brho \in L^2(\Omega)^3: \curl \brho=\boldsymbol{0},\, \div\brho=0\,\, {\rm in}\,\,\Omega;\,\,\brho\cdot\n=0\,{\rm on}\,\, \partial\Omega\}.
    \]
    Moreover, if $\curl\Q = \boldsymbol{0}$, then it follows that $\A=\boldsymbol{0}$. For a simply connected domain, $\mathcal{H}(\Omega)=\{\boldsymbol{0}\}$, so the topology of a multiply connected domain is encoded in $\mathcal{H}(\Omega)$. We now describe how to characterize the basis $\mathcal{H}(\Omega)$ in terms of the cycles and cutting surfaces introduced above. 
    
     We have that ${\rm dim}(\mathcal{H}(\Omega))=g$ and, following \cite{foias1978}, a basis $\{\brho\}_{i=1}^g$ can be constructed where $\brho_i=\widetilde{\nabla}\phi_i$, for $\phi\in H^1(\Omega\setminus\Sigma_i)$, with the following properties:
\begin{eqnarray*}
    \Delta \phi_i &= 0 \quad {\rm in} \quad \Omega\setminus\Sigma_i,\\
    \partial_{\n}\phi_i &= 0 \quad {\rm on} \quad \partial\Omega, \\
    {[\![\partial_{\n}\phi_i]\!]}_{\Sigma_i} &= 0, \\ 
    {[\![\phi_i]\!]}_{\Sigma_i} &= 1,
\end{eqnarray*}
where $ [\![\cdot]\!]_{\Sigma_i}$ denotes the jump across $\Sigma_i$. The notation $\widetilde{\nabla}\phi_i$ refers to the extension of this quantity to $L^2(\Omega)^3$. We discuss how this extension relates to the topology of $\Omega$ in \ref{app_A}. Analogous definitions apply for harmonic fields in $\Omega'$.

With this construction of the basis of $\mathcal{H}(\Omega)$, we can now perform calculations. Here, we are going to derive an orthogonality result that will be important later. For brevity, we will only outline the key steps and direct the reader to other work containing auxiliary derivations that are required for the derivation. In $\Omega$, let $\X$ be a divergence-free vector field, tangent to the boundary, and let $\Y$ be a curl-free vector field. Now, writing $\X=\curl\A$, for some vector potential $\A$, we have
\begin{equation}\label{orth1}
\int_\Omega\X\cdot\Y\,\d^3x = \int_\Omega\curl\A\cdot\Y\,\d^3x = \int_{\partial\Omega}\n\times\A\cdot\Y\,\d^2x,
\end{equation}
where the last integral follows from Green's identity with $\curl\Y=\boldsymbol{0}$. 

Now consider two fields $\A_1$ and $\A_2$ in $\Omega$ with $\curl\A_1\cdot\n=\curl\A_2\cdot\n=0$ on $\partial\Omega$. Alonso-Rodríguez et al. \cite{alonso2018}, by expressing the tangential traces of these fields in terms of the basis vectors of $\mathcal{H}(\Omega)$ and $\mathcal{H}(\Omega')$, derived the following relation
\begin{equation}
    \int_{\partial\Omega}\A_1\times\n\cdot\A_2\,\d^2x = \sum_{i=1}^g\alpha_i\mu_i - \sum_{i=1}^g\beta_i\delta_i, 
\end{equation}
where
\[
\alpha_i = \oint_{\gamma_i}\A_1\cdot\t_i\,\d x, \quad \beta_i = \oint_{\gamma'_i}\A_1\cdot\t'_i\,\d x,
\]
\[
\delta_i = \oint_{\gamma_i}\A_2\cdot\t_i\,\d x, \quad \mu_i = \oint_{\gamma'_i}\A_2\cdot\t'_i\,\d x.
\]
For our problem, we identify $\A_1:=\A$ and $\A_2:=\Y$. It is clear that 
\[
\beta_i = \int_{\Sigma_i}\X\cdot\n_{\Sigma_i}\,\d^2x, \quad \mu_i = 0,
\]
following the application of Stokes' theorem. Substituting these results into equation (\ref{orth1}) leads to the orthogonality relation
\begin{equation}\label{orth}
    \int_\Omega\X\cdot\Y\,\d^3x = \sum_{i=1}^g\left(\oint_{\gamma_i}\Y\cdot\t_i\,\d x\right)\left(\int_{\Sigma_i}\X\cdot\n_{\Sigma_i}\,\d^2x\right).
\end{equation}
This identity was used by MacTaggart and Valli \cite{mactaggart2019} to generalize classical helicity to multiply connected domains (equation (\ref{hel_gen})). We will also make use of equation (\ref{orth}) in generalizing relative helicity to multiply connected domains.

 Finally, we note that we will also make use of the fact that the basis functions satisfy 
\[
\oint_{\gamma_j}\brho_i\cdot\t_j\,\d x = \delta_{ij}, 
\]
for each $i$,$j=1,\dots,g$. We will expand upon this property in \ref{app_A}.
\section{General definition of relative helicity}

\begin{definition}\label{def}

    Let $\Omega\subset\mathbb{R}^3$ be a multiply connected domain as defined above. Consider divergence-free fields $\B$ and $\B'$ in $\Omega$ such that $\B\cdot\n=\B'\cdot\n=f$ on $\partial\Omega$. Denote the vector potentials of $\B$ and $\B'$ as $\A$ and $\A'$ respectively. The general definition of relative magnetic helicity, is given by
    \begin{eqnarray}\label{hr_gen}
        H^R = \int_\Omega(\A+\A')\cdot(\B-\B')\,\d^3x -\sum_{i=1}^g&\left(\oint_{\gamma_i}(\A+\A')\cdot\t_i\,\d x\right)\times\nonumber\\
        &\left(\int_{\Sigma_i}(\B-\B')\cdot\n_{\Sigma_i}\,\d^2x\right).
    \end{eqnarray}
\end{definition}

\begin{remark}
    It is clear that equation (\ref{hr_gen}) reduces to equation (\ref{hel_gen}) when $\B$ is everywhere tangent to $\partial\Omega$ and $\B'$ is set to zero. Thus, the classical helicity $H$ is contained within $H^R$.
\end{remark}

\begin{theorem}\label{theorem}
    $H^R$ given in Definition \ref{def} is gauge-invariant and, assuming $f\u=\mathbf{0}$ on $\partial\Omega$, is conserved in time under ideal MHD.
\end{theorem}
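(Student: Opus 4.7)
For gauge invariance, the plan is to apply the orthogonality identity~(\ref{orth}) directly. Any admissible gauge transformation takes the form $\A\to\A+\w$ with $\curl\w=\0$, and this includes multi-valued potentials whose gradients carry non-trivial circulations around the $\gamma_i$ (the modes that break invariance of the uncorrected helicity). The bulk term of~(\ref{hr_gen}) shifts by $\int_\Omega\w\cdot(\B-\B')\,\d^3x$. Since $\B-\B'$ is divergence-free and $(\B-\B')\cdot\n=0$ on $\partial\Omega$, setting $\X=\B-\B'$ and $\Y=\w$ in~(\ref{orth}) rewrites this shift as $\sum_i\left(\oint_{\gamma_i}\w\cdot\t_i\,\d x\right)\left(\int_{\Sigma_i}(\B-\B')\cdot\n_{\Sigma_i}\,\d^2x\right)$, which is exactly the change in the correction term. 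The same reasoning handles $\A'\to\A'+\w'$, so $H^R$ is gauge-invariant.

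For time invariance, I would exploit gauge freedom to work in the ideal-MHD gauge $\partial_t\A=\u\times\B$. The boundary conditions $\u\cdot\n=0$ (closedness) and $f\u=\0$ together preserve $f$ in time on $\partial\Omega$, so the reference field may be taken stationary, with $\partial_t\A'=\0$ and $\partial_t\B'=\0$. Differentiating the bulk part of $H^R$ then yields
\[
\int_\Omega(\u\times\B)\cdot(\B-\B')\,\d^3x+\int_\Omega(\A+\A')\cdot\curl(\u\times\B)\,\d^3x.
\]
The first integrand collapses to $-\B'\cdot(\u\times\B)$ since $(\u\times\B)\cdot\B=0$; Green's identity converts the second into $\int_\Omega(\B+\B')\cdot(\u\times\B)\,\d^3x+\int_{\partial\Omega}(\A+\A')\cdot\bigl(\n\times(\u\times\B)\bigr)\,\d^2x$. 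On $\partial\Omega$ the triple cross product evaluates to $\n\times(\u\times\B)=f\u-(\u\cdot\n)\B=\0$ under the standing hypotheses, so the surface term vanishes and the bulk contributions cancel.

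The correction contributions to $dH^R/dt$ require the cycles $\gamma_i,\gamma_i'$ to be realised on $\partial\Omega$, which is legitimate since each $\gamma_i$ may be taken on the boundary (as generators of $H_1(\partial\Omega)$) and $\gamma_i'=\partial\Sigma_i$ already lies on $\partial\Omega$. Decomposing $\B=f\n+\B_\|$ on the boundary, $\u\times\B=f(\u\times\n)+\u\times\B_\|$: the first term vanishes by $f\u=\0$, and the second is parallel to $\n$ as a cross product of two tangent vectors. Hence $\oint_{\gamma_i}(\u\times\B)\cdot\t_i\,\d x=0$, killing the piece of the correction involving $\partial_t\A$. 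For the remaining piece, Stokes converts $\int_{\Sigma_i}\partial_t\B\cdot\n_{\Sigma_i}\,\d^2x=\int_{\Sigma_i}\curl(\u\times\B)\cdot\n_{\Sigma_i}\,\d^2x$ into $\oint_{\gamma_i'}(\u\times\B)\cdot\t_i'\,\d x$, which vanishes by the same argument. The genuine obstacle throughout is the single surface identity $\n\times(\u\times\B)=f\u$ on $\partial\Omega$: it is precisely why $f\u=\0$ is the right hypothesis and is what simultaneously annihilates the Green's-identity boundary term and both correction contributions; everything else reduces to algebra, Stokes, and the orthogonality relation already established in the preliminaries.
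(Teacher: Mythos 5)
Your proof is correct, and while Part 1 (gauge invariance) is essentially the paper's argument --- a direct application of the orthogonality relation (\ref{orth}) to the curl-free difference of two vector potentials, with the bulk shift cancelling the shift in the correction term --- your Part 2 takes a genuinely different and leaner route. The paper keeps the gauge completely general, writing $\partial_t\A=\u\times\B+\nabla\psi+\sum_j\delta_j\brho_j$, and tracks how the $\delta_j$ contributions cancel between the first and third terms of $\d H^R_{\mathcal L}/\d t$; you instead invoke the already-proved gauge invariance to fix $\partial_t\A=\u\times\B$ outright, which eliminates that bookkeeping. Likewise, where the paper evaluates $\int_{\Sigma_i}\partial_t\B\cdot\n_{\Sigma_i}$ via the harmonic-field identity $\int_{\Sigma_i}\B\cdot\n_{\Sigma_i}=\int_\Omega\B\cdot\brho_i-\int_{\partial\Omega}f\phi_i$ and the relation $\int_{\partial\Omega}[\n\times(\u\times\B)]\cdot\brho_i=\int_{\partial\Omega}f\u\cdot\brho_i$, you apply Stokes' theorem on $\Sigma_i$ to reduce it to a circulation of $\u\times\B$ along $\gamma_i'=\partial\Sigma_i\subset\partial\Omega$, killed by the same surface identity $\n\times(\u\times\B)=f\u-(\u\cdot\n)\B=\0$. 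Both treatments are valid (your appeal to $\gamma_i$ lying on $\partial\Omega$ and $\u$ being tangent there is justified by the geometric setup and the line-tied conditions), but the paper's generality buys something yours does not: by carrying the arbitrary combination $\alpha\A+\beta\A'$ through the computation, it isolates the residual term $(\beta-\alpha)\int_\Omega(\u\times\B)\cdot\B'\,\d^3x$ and thereby \emph{explains} why time conservation forces the Finn--Antonsen combination $\A+\A'$; your gauge-fixed argument verifies conservation for that combination but does not exhibit its necessity. A small presentational point: when you say the bulk shift ``is exactly the change in the correction term,'' you mean it equals the amount the correction term subtracts, so the two cancel --- worth stating with the sign explicit.
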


\begin{proof}
    We split the proof into two parts: gauge-invariance and time-conservation.

    \subsection{Part 1: gauge-invariance}
    The gauge-invariance of relative helicity follows primarily from the fact that the definition is based on $\B-\B'$, which is everywhere tangent to $\partial\Omega$. We now show that the particular combination $\A+\A'$ (which appears in both $H^R_2$ and $H^R$) is not important for gauge-invariance, and that any linear combination of $\A$ and $\A'$ will do. We thus replace $\A+\A'$ in Definition \ref{def} by $\mathcal{L}(\A,\A') = \alpha\A+\beta\A'$, for $\alpha$, $\beta\in\mathbb{R}$, and denote this modified form of the helicity by $H^R_{\mathcal{L}}$. Clearly, $H^R_{\mathcal{L}}$ reduces to $H^R$ when $\alpha=\beta=1$.

    Suppose we have two vector potentials $\A$ and $\A_\star$ of $\B$ and two vector potentials $\A'$ and $\A'_\star$ of $\B'$. Then

\begin{eqnarray*}
&H^R_{\mathcal{L}}(\A,\A') - H^R_{\mathcal{L}}(\A_\star,\A'_\star) \\ 
&\quad= \int_\Omega [\alpha({\A}-\A_\star) + \beta({\A}'-\A'_\star)] \cdot ({\B} - {\B}') \, \d^3x \\
&\quad- \sum_{i=1}^g \left(\oint_{\gamma_i} [\alpha({\A}-\A_\star) + \beta({\A}'-\A'_\star)]\cdot {\bf t}_i\, \d x\right) \left(\int_{\Sigma_i} (\B - \B') \cdot \n_{\Sigma_i}\,\d^2 x\right)\\
&\quad= 0.
\end{eqnarray*}
This result follows as since $\curl [\alpha({\A}-\A_\star) + \beta({\A}'-\A'_\star)] = \0$ in $\Omega$, $\div (\B - \B') = 0$ in $\Omega$ and $(\B - \B') \cdot \n =0$ on $\partial \Omega$, we can make use of the orthogonality relation (\ref{orth}). To summarize, $H^R_{\mathcal{L}}$ is gauge-invariant without any additional assumptions on $\B$ or $\B'$ (or their vector potentials).

\subsection{Part 2: conservation in time}
Laurence and Avellaneda \cite{laurence1991} showed that $H^R_1$ could be extended to multiply connected domains subject to a suitable choice of $\B'$ and gauge conditions. In the following, we assume no gauge conditions or particular choices of $\B'$, and the only constraint will be that of imposing closed and line-tied boundary conditions. In other words, as in Laurence and Avellaneda \cite{laurence1991}, we adopt
\begin{eqnarray*}
    \B\cdot\n &= \B'\cdot\n = f, \,\, \mbox{given and fixed in time},\\
    \u\cdot\n &=0, \,\, {\rm if}\,\,\B\cdot\n=0,\\
    \u &= \mathbf{0}, \,\, {\rm if}\,\,\B\cdot\n\ne0.
\end{eqnarray*}
The above boundary conditions for the velocity field can be encapsulated in the expression $f\u=\mathbf{0}$, and, as we will show, this expression is a crucial condition for $H^R$ to be conserved in time under ideal MHD if no extra conditions are imposed on the magnetic field. We note, in passing, that the boundary condition $f\u=\boldsymbol{0}$ is also assumed, among others, in Finn and Antonsen \cite{finn1985}.

To begin with, we will consider the time variation of $H_{\mathcal{L}}^R$ and show that this must reduce to $H^R$. With $\B'$ and $\A'$ being fixed in time, we have

\begin{eqnarray*}
\frac{\d H_{\mathcal{L}}^R(t)}{\d t} &= \alpha\int_\Omega \frac{\partial {\A}}{\partial t}\cdot ({\B} - {\B}') \, \d^3x + \int_\Omega  (\alpha{\A} + \beta\A')\cdot \frac{\partial {\B}}{\partial t}  \, \d^3x \\
&\qquad - \alpha\sum_{i=1}^g \Big(\oint_{\gamma_i}  \frac{\partial {\A}}{\partial t}\cdot {\bf t}_i dx\Big) \Big(\int_{\Sigma_i} (\B - \B') \cdot \n_{\Sigma_i} \d^2 x\Big) \\ 
&\qquad - \sum_{i=1}^g \Big(\oint_{\gamma_i} (\alpha{\A}+ \beta\A')\cdot {\bf t}_i dx\Big) \Big(\int_{\Sigma_i}  \frac{\partial  \B}{\partial t} \cdot \n_{\Sigma_i} \d^2 x\Big).
\end{eqnarray*}

Let us evaluate the four terms on the right-hand side in turn. Since $\partial{\B}/\partial t = \curl({\bf u} \times \B)$ in $\Omega$, we have $\curl (\partial\A/\partial t - {\bf u} \times \B) = \0$ in $\Omega$, therefore
\[
\frac{\partial \A}{\partial t} = {\u} \times \B + \nabla \psi + \sum_{j=1}^g \delta_j \brho_j,
\]
for some scalar function $\psi$ and $\delta_j\in\mathbb{R}$ $(j=1,\dots, g)$. Using this result, the first term is equal to
\begin{eqnarray}
\alpha\int_\Omega \frac{\partial{\A}}{\partial t}\cdot ({\B} - {\B}') \, \d^3x &= \alpha\int_\Omega \Big({\bf u} \times \B + \nabla \psi + \sum_{j=1}^g \delta_j \brho_j\Big)\cdot ({\B} - {\B}') \, \d^3x  \nonumber\\ 
&= -\alpha\int_\Omega ({\u} \times \B) \cdot {\B}' \, \d^3x \nonumber\\
&\quad+ \alpha\sum_{j=1}^g \delta_j \int_{\Sigma_j} ({\B} - {\B}') \cdot \n_{\Sigma_j} \, \d^2x,\label{first}
\end{eqnarray}
having used that $\int_\Omega \nabla \psi \cdot ({\B} - {\B}') \, \d^3x = 0$ (e.g. \cite{berger1984}) and that $\int_\Omega \brho_j\cdot ({\B} - {\B}') \, \d^3x= \int_{\Sigma_j} ({\B} - {\B}') \cdot \n_{\Sigma_j} \, \d^2x$ (e.g. \cite{mactaggart2019}), since $\B-\B'$ is a divergence-free and tangential vector field.

The second term takes the value
\begin{eqnarray}
&\int_\Omega  (\alpha{\A} + \beta\A')\cdot \frac{\partial {\B}}{\partial t}  \, \d^3x = \int_\Omega  (\alpha{\A} + \beta\A')\cdot \curl ({\u} \times \B)  \, \d^3x \nonumber\\ 
&= \int_\Omega  (\alpha{\B} + \beta\B')\cdot ({\u} \times \B)  \, \d^3x + \int_{\partial \Omega} (\alpha{\A} + \beta\A')\cdot [\n \times ({\bf u} \times\B)] \, \d^2 x \nonumber\\
&= \beta\int_\Omega ({\u} \times \B) \cdot \B'  \, \d^3x+ \int_{\partial \Omega} (\alpha{\A} + \beta\A')\cdot [(\B \cdot \n){\u} - ({\u} \cdot \n)\B] \, \d^2 x\nonumber \\
&= \beta\int_\Omega ({\u} \times \B) \cdot \B'  \, \d^3x +\int_{\partial \Omega} ({\alpha\A} + \beta\A')\cdot f {\u}  \, \d^2 x.\label{second}
\end{eqnarray}

The third term is equal to
\begin{eqnarray*}
&- \alpha\sum_{i=1}^g \Big(\oint_{\gamma_i}  \frac{\partial {\A}}{\partial t}\cdot {\t}_i \d x\Big) \Big(\int_{\Sigma_i} (\B - \B') \cdot \n_{\Sigma_i} \d^2 x\Big)\nonumber \\ 
&\qquad = - \alpha\sum_{i=1}^g \Big(\oint_{\gamma_i}  ({\u} \times \B + \nabla \psi + \sum_{j=1}^g \delta_j \brho_j)\cdot {\t}_i \d x\Big) \times \nonumber\\
&\qquad\quad\Big(\int_{\Sigma_i} (\B - \B') \cdot \n_{\Sigma_i} \d^2 x\Big)\nonumber \\
&\qquad =  - \alpha\sum_{i=1}^g \Big(\oint_{\gamma_i}  ({\u} \times \B)\cdot {\t}_i \d x\Big) \Big(\int_{\Sigma_i} (\B - \B') \cdot \n_{\Sigma_i} \d^2 x\Big) \\
&\qquad \quad\, - \alpha\sum_{j=1}^g \delta_j  \int_{\Sigma_j} (\B - \B') \cdot \n_{\Sigma_j} \d^2 x,
\end{eqnarray*}
having used that $\oint_{\gamma_i} \nabla \psi \cdot {\bf t}_i \, dx = 0$ for each $i=1,\ldots,g$, and that $\oint_{\gamma_i} \brho_j\cdot {\bf t}_i dx = \delta_{ij}$ for each $i,j=1,\ldots,g$. 

Observe that on the boundary $\partial \Omega$ we have $\B = \n \times \B \times \n + f \n$ and that ${\bf t}_i \times {\bf u}$ is a normal vector, therefore
\[
({\u} \times \B)\cdot {\t}_i = ({\t}_i \times {\u}) \cdot \B = ({\t}_i \times {\u})\cdot f \n = (\n \times {\t}_i) \cdot f{\u}. 
\]
Thus we conclude that the third term is given by
\begin{eqnarray}
&-\alpha\sum_{i=1}^g \Big(\oint_{\gamma_i}  \frac{\partial {\A}}{\partial t}\cdot {\t}_i \d x\Big) \Big(\int_{\Sigma_i} (\B - \B') \cdot \n_{\Sigma_i} \d^2 x\Big) \nonumber\\
&\qquad =  - \alpha\sum_{i=1}^g \Big(\oint_{\gamma_i}  (\n \times {\t}_i) \cdot f {\u} \, \d x\Big) \Big(\int_{\Sigma_i} (\B - \B') \cdot \n_{\Sigma_i} \d^2 x\Big)\nonumber \\
&\qquad\quad\, - \alpha\sum_{j=1}^g \delta_j  \int_{\Sigma_j} (\B - \B') \cdot \n_{\Sigma_j} \d^2 x,\label{third}
\end{eqnarray}
Before tackling the fourth term, we require the preliminary result,
\begin{eqnarray*}
\int_\Omega \B \cdot \brho_i \, \d^3x &= \int_\Omega \B \cdot  \widetilde \nabla \phi_i \, \d^3 x = \int_{\Omega\setminus\Sigma_i} \B \cdot \nabla \phi_i \, \d^3 x\\
&= \int_{\partial\Omega\setminus\partial\Sigma_i} (\B\cdot\n)\phi_i\,\d^2x + \int_{\Sigma_i}\B\cdot\n_{\Sigma_i}\,\d^2x\\
&=\int_{\partial \Omega} f \, \phi_i \, \d^2 x + \int_{\Sigma_i} \B \cdot \n_{\Sigma_i} \, \d^2 x.  
\end{eqnarray*}
Now, with $f$ being given and fixed in time,

\begin{eqnarray*}
\int_{\Sigma_i} \frac{\partial \B}{\partial t} \cdot \n_{\Sigma_i} \, \d^2 x &= \frac{\d}{\d t} \int_{\Sigma_i} \B \cdot \n_{\Sigma_i} \, \d^2 x = \frac{\d}{\d t} \int_\Omega \B \cdot \brho_i \, \d^3x\\
& = \int_\Omega \frac{\partial\B}{\partial t} \cdot \brho_i \, \d^3x =  \int_\Omega \curl ({\u} \times \B) \cdot \brho_i \, \d^3x\\
&= \int_{\partial \Omega} [\n \times ({\bf u} \times \B)]  \cdot \brho_i \, \d^2 x = \int_{\partial \Omega} f {\u} \cdot \brho_i \, \d^2 x,
\end{eqnarray*}
as $\n \times ({\u} \times \B) = f {\u}$. Therefore, the fourth term is given by
\begin{eqnarray}
&- \sum_{i=1}^g \Big(\oint_{\gamma_i} (\alpha{\A}+ \beta\A')\cdot {\bf t}_i \d x\Big) \Big(\int_{\Sigma_i}  \frac{\partial \B}{\partial t} \cdot \n_{\Sigma_i} \d^2 x\Big)\nonumber \\
& \qquad = - \sum_{i=1}^g \Big(\oint_{\gamma_i} (\alpha{\A}+ \beta\A')\cdot {\bf t}_i \d x\Big) \Big(\int_{\partial \Omega} f {\u} \cdot \brho_i \, \d^2 x\Big).\label{fourth}
\end{eqnarray}
Adding equations (\ref{first}-\ref{fourth}), we have
\begin{eqnarray}\label{dHdt}
    \frac{\d H_{\mathcal{L}}^R(t)}{\d t} &= (\beta-\alpha)\int_\Omega ({\u} \times \B) \cdot {\B}' \, \d^3x +\int_{\partial \Omega} ({\alpha\A} + \beta\A')\cdot f {\u}  \, \d^2 x \nonumber\\
    &\quad - \alpha\sum_{i=1}^g \Big(\oint_{\gamma_i}  (\n \times {\t}_i) \cdot f {\u} \, \d x\Big) \Big(\int_{\Sigma_i} (\B - \B') \cdot \n_{\Sigma_i} \d^2 x\Big) \nonumber\\
    &\quad - \sum_{i=1}^g \Big(\oint_{\gamma_i} (\alpha{\A}+ \beta\A')\cdot {\bf t}_i \d x\Big) \Big(\int_{\partial \Omega} f {\u} \cdot \brho_i \, \d^2 x\Big).
\end{eqnarray}
To eliminate the first term on the right-hand side of equation (\ref{dHdt}), we set $\alpha=\beta$. Since the resulting coefficient becomes just a scaling factor, we take $\alpha=\beta=1$, and so $H^R_{\mathcal{L}}$ reduces to $H^R$. To eliminate the remaining terms on the right-hand side of equation (\ref{dHdt}), we set $f\u=\mathbf{0}$ on $\partial\Omega$, and so
\[
\frac{\d H^R}{\d t} = 0.
\]
This completes the proof of Theorem \ref{theorem}.
\end{proof}

\section{Concluding remarks}
In this work, we have generalized relative magnetic relative helicity to multiply connected domains of very general topology. We have shown that equation (\ref{hr_gen}) satisfies criteria (1) and (2) from the end of the Introduction, namely that the general definition reduces to the classical case when $\B$ is tangent to $\partial\Omega$ (and $\B'$ can be ignored) and that the definition is gauge-invariant and conserved in time under ideal MHD (subject to natural boundary conditions). Further, there are no conditions imposed on $\B$ or $\B'$ (or their vector potentials) in equation (\ref{hr_gen}), beyond the standard magnetic boundary conditions $\B\cdot\n=\B'\cdot\n=f\ne0$.  

In reference to criterion (3), notice that $H^R$ reduces to $H^R_2$, the Finn-Antonsen formula (\ref{fa}), upon the imposition of the condition
\begin{equation}\label{flux_balance}
\int_{\Sigma_i} (\B - \B') \cdot \n_{\Sigma_i} \d^2 x=0,
\end{equation}
for $i=1,\dots,g$. It is interesting to note that although the more general expression, $H^R_{\mathcal{L}}$, is gauge-invariant, this reduces to $H^R$ when imposing conservation in time, and both gauge-invariance and time-conservation are needed for any quantity to be considered an invariant of ideal MHD. This provides an explanation for why the integrand of the Finn-Antonsen formula has its particular form. Finally, for the sake of completeness, we note that, as discussed earlier, $H^R_1$ follows from $H^R$ by imposing both equation (\ref{flux_balance}) and the condition $\A\times\n=\A'\times\n$ on $\partial\Omega$. We note that (\ref{flux_balance}) is necessary for this last condition, and the proof is given in \ref{app_B}.

\appendix

\section{}\label{app_A}
Here we elucidate on the connections between algebraic topology and vector fields in multiply connected domains and, in particular, the characterization of the space of harmonic fields. This material is not normally presented in a style that is immediately recognizable for a fluid dynamics/plasma physics audience, something which we hope to remedy shortly. First, however, we list some other works that would be useful for those interested in studying the relationship of topology to fluid dynamics and MHD.

A very readable account of the application of homology to MHD is given in Blank, Friedrichs and Grad \cite{blank1957}. The orthogonality relation (\ref{orth}) is derived in this work by a different method, but there is no discussion of helicity as the work predates the introduction of helicity by Woltjer \cite{woltjer1958}! Cantarella, DeTurck and Gluck \cite{cantarella2002} provide a clear introduction to the Hodge decomposition in $\mathbb{R}^3$, and this is a good starting place for readers new to this subject. A more technical approach is provided in the monograph by Alonso Rodríguez  and Valli \cite{alonso2010}, which focusses on computing electromagnetic fields in multiply connected domains.

In this paper, we have `cut' our domains in order to perform calculations, and this is a typical approach (e.g. \cite{foias1978,alonso2010,gross2004}). However, cuts need not be introduced \emph{a priori} to characterize the space of harmonic fields $\mathcal{H}(\Omega)$, and we will now follow the general approach of Ghiloni \cite{ghiloni2010} (in a much more condensed form). First, let us begin with some fundamental results.
\begin{theorem}[de Rham]
    The first homology group and the first de Rham cohomology group are finitely generated and have the same rank, that is given by the genus $g$ (equivalent to the first Betti number of $\overline{\Omega}$).
\end{theorem}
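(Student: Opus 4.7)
The plan is to combine three classical ingredients: a triangulation argument for finite generation of $H_1(\overline{\Omega};\Z)$, the de Rham isomorphism identifying the de Rham cohomology of $\Omega$ with singular cohomology with real coefficients, and the universal coefficient theorem to pass between cohomology and homology ranks. The statement is essentially an instance of the de Rham theorem specialized to a bounded Lipschitz domain in $\R^3$, so the work is in assembling these ingredients and verifying that the regularity and compactness of $\Omega$ are sufficient.

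First I would establish that $\overline{\Omega}$, being a compact Lipschitz domain, admits a finite triangulation and hence has the homotopy type of a finite CW complex (in particular of a smooth compact manifold with boundary, onto which it deformation retracts). Standard algebraic topology then gives that $H_1(\overline{\Omega};\Z)$ is finitely generated, and by definition its rank is the first Betti number, which the paper denotes $g$. A minor subtlety here is that the homology of $\overline{\Omega}$ must be identified with the cohomology of $\Omega$ — this is harmless because $\overline{\Omega}$ and $\Omega$ are homotopy equivalent for Lipschitz domains, so the Betti numbers agree.

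Next I would invoke the de Rham theorem: the integration pairing
\[
\omega \;\longmapsto\; \Big([c]\,\mapsto\,\int_c \omega\Big)
\]
induces an isomorphism $H^1_{dR}(\Omega)\to H^1(\Omega;\R)$ of real vector spaces. This is the substantive analytic step and the part I would expect to be the principal obstacle if one were forced to produce a self-contained proof; the standard routes proceed either by a Mayer--Vietoris induction on a good cover, or by a sheaf-theoretic comparison between the constant sheaf $\underline{\R}$ and the de Rham complex viewed as a fine resolution. In the present setting it suffices to quote the version developed in Ghiloni~\cite{ghiloni2010}, which is tailored to Lipschitz domains of the type considered here.

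Finally, the universal coefficient theorem gives the algebraic identification
\[
H^1(\Omega;\R)\;\cong\;\mathrm{Hom}_{\Z}\bigl(H_1(\Omega;\Z),\,\R\bigr),
\]
whose real dimension equals the rank of the free part of $H_1(\Omega;\Z)$, which is $g$. Chaining the three isomorphisms then shows that $H^1_{dR}(\Omega)$ is a finite-dimensional real vector space of dimension $g$, which simultaneously delivers the finite generation claim and the equality of ranks asserted in the theorem.
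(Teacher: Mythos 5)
Your outline is sound, but note that the paper itself offers no proof of this statement at all: it is quoted as a foundational classical fact at the start of \ref{app_A}, within a condensed review of the approach of Ghiloni \cite{ghiloni2010}, and the appendix's actual work begins only afterwards (constructing the loop fields and projecting them into $\mathcal{H}(\Omega)$). So there is nothing in the paper to compare your argument against step by step; what you have written is the standard textbook assembly --- finite triangulability of a compact Lipschitz domain for finite generation of $H_1(\overline{\Omega};\Z)$, the de Rham isomorphism $H^1_{dR}(\Omega)\cong H^1(\Omega;\R)$ (which applies directly since $\Omega$ is an open subset of $\R^3$ and hence a smooth manifold), and the universal coefficient theorem over the field $\R$ to convert the cohomological dimension into the rank of $H_1$ --- and each ingredient is correctly deployed. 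You are also right to flag the two points that need care in this low-regularity setting: the homotopy equivalence $\Omega\simeq\overline{\Omega}$ (via a Lipschitz collar) and the fact that in the paper's functional-analytic framework the cohomology is realized by curl-free vector fields in $L^2(\Omega)^3$ modulo gradients rather than by smooth forms, so strictly one also needs the (standard) identification of the $L^2$ and smooth de Rham complexes; citing \cite{ghiloni2010} for that comparison, as you do, is exactly what the paper intends. In short: your proof fills in a gap the paper deliberately leaves to the literature, and does so correctly.
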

This theorem describes two groups. The first homology group is generated by $g$ (equivalence classes) of non-bounding cycles in $\overline{\Omega}$. In this work, we have labelled these cycles $\{\gamma_i\}_{i=1}^g$. Their equivalence classes $\{[\gamma_i]\}_{i=1}^g$ are generators of the first homology group. We speak of equivalence classes because deformations of the $\gamma_i$, that do not change the topology of the curves, have no effect (these are topological objects).

The dual description of homology is cohomology, and the first de Rham cohomology group is generated by (equivalence classes of) \emph{loop fields} in $\Omega$, which are curl-free vector fields that cannot be written as gradients in $\Omega$. It is these loop fields that we will use to construct the basis of $\mathcal{H}(\Omega)$. These fields also provide the extension of gradients to $L(\Omega)^3$, which we mentioned in Section \ref{sec_hodge}.

\begin{theorem}\label{lin_ind}
    A set of generators of the first de Rham cohomology group is given by the equivalence classes of $g$ loop fields $\hat{\brho}_i$ such that
    \[
    \oint_{\gamma_j}\hat{\brho}_i\cdot\t_j\,\d x = \delta_{ij},
    \]
    where $\delta_{ij}$ is the Kronecker delta.
\end{theorem}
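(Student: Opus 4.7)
The plan is to exhibit an explicit family by identifying the sought $\hat{\brho}_i$ with the harmonic basis $\brho_i = \widetilde{\nabla}\phi_i$ constructed in Section \ref{sec_hodge}, and then verifying the two required properties: the Kronecker-delta period relation and the generating property for the first de Rham cohomology group.

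First, I would confirm that each $\brho_i$ is indeed a loop field. On each side of $\Sigma_i$ the field is the classical gradient of a single-valued function, hence curl-free; the hypothesis $[\![\partial_{\n}\phi_i]\!]_{\Sigma_i}=0$ guarantees that the extension $\widetilde{\nabla}\phi_i$ remains curl-free across $\Sigma_i$ in the distributional sense on all of $\Omega$. On the other hand, the unit jump $[\![\phi_i]\!]_{\Sigma_i}=1$ prevents $\brho_i$ from being the gradient of any single-valued function on $\Omega$, so its de Rham class is nontrivial.

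The central step is the period computation $\oint_{\gamma_j}\brho_i\cdot\t_j\,\d x = \delta_{ij}$. By the geometric setup of Section 2.1, $\gamma_j$ meets $\overline{\Sigma}_i$ transversally at a single point when $i=j$ and not at all otherwise. When $i\neq j$, the curve $\gamma_j$ lies entirely in a region where $\phi_i$ is single-valued, so $\brho_i=\nabla\phi_i$ along $\gamma_j$ and the integral around the closed loop vanishes. When $i=j$, I would parametrise $\gamma_i$ as a path in $\Omega\setminus\Sigma_i$ starting and ending on opposite sides of $\Sigma_i$, apply the fundamental theorem of calculus to the single-valued branch of $\phi_i$, and obtain exactly the jump $[\![\phi_i]\!]_{\Sigma_i}=1$. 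The careful bookkeeping of the multi-valued potential near its cut, and in particular the sign convention tying the orientation of $\gamma_i$ to that of $\n_{\Sigma_i}$, is the main technical obstacle.

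With the period relations in place, linear independence of the classes $\{[\brho_i]\}_{i=1}^g$ follows at once: any relation $\sum_i \lambda_i \brho_i = \nabla\psi$, when integrated around $\gamma_j$, forces $\lambda_j=0$. Since the preceding de Rham theorem asserts that the first de Rham cohomology group has rank $g$, these $g$ linearly independent classes generate the group, completing the proof.
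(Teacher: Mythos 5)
Your proposal is correct, but it takes a genuinely different and more constructive route than the paper's. The paper's proof takes the existence of loop fields with the stated periods for granted and only verifies linear independence: if $\sum_i\alpha_i[\hat{\brho}_i]=0$ in cohomology then $\sum_i\alpha_i\hat{\brho}_i=\nabla\chi$, and integrating over $\gamma_j$ kills $\alpha_j$; generation then follows because the preceding de Rham theorem fixes the rank at $g$. You instead exhibit explicit representatives by identifying the $\hat{\brho}_i$ with the cut-based harmonic fields $\widetilde{\nabla}\phi_i$ of Section 2.2 and computing their periods from the unit jump across $\Sigma_i$. What your approach buys is an existence proof that the paper leaves implicit (at the price of assuming solvability of the transmission problem for $\phi_i$); what it works against is the stated purpose of the appendix, namely to characterize $\mathcal{H}(\Omega)$ following Ghiloni \emph{without} introducing cutting surfaces a priori --- and with your choice the paper's subsequent correction $\brho_i=\hat{\brho}_i-\nabla\omega_i$ becomes vacuous, since your $\hat{\brho}_i$ are already divergence-free and tangent to the boundary. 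Two small technical points: (i) curl-freeness of $\widetilde{\nabla}\phi_i$ across $\Sigma_i$ follows from the \emph{constancy} of the jump $[\![\phi_i]\!]_{\Sigma_i}=1$, which makes the tangential traces of the gradient match across the cut; the condition $[\![\partial_{\n}\phi_i]\!]_{\Sigma_i}=0$ is instead what yields divergence-freeness, so your attribution should be swapped; (ii) your last step, and the paper's, silently uses that the first de Rham cohomology group is a real vector space, so that $g$ linearly independent classes in a rank-$g$ group necessarily span --- this is fine here, but worth stating.
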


\begin{proof}
    It is enough to show that the $\hat{\brho}_i$ are linearly independent. Suppose that we have
    \[
    \sum_i \alpha_i[\hat{\brho}_i] = \boldsymbol{0},
    \]
    for $\alpha_i\in \mathbb{R}$, which is equivalent writing
    \[
    \sum_i \alpha_i\hat{\brho}_i = \nabla\chi,
    \]
    for some scalar function $\chi$. It is clear that integrating on $\gamma_j$ leads to 
    \[
    0 = \oint_{\gamma_j}\sum_i\alpha_i\hat{\brho}_i\cdot\t_j\,\d x = \alpha_i\delta_{ij} = \alpha_j. 
    \]
\end{proof}
Although we will use these loop fields to characterize $\mathcal{H}(\Omega)$, they are not yet in a suitable form, i.e. although they are curl-free, they need to also be made divergence-free and everywhere tangent to the boundary. Let us denote $\omega_i$ to be the solution of the Neumann problem
\begin{eqnarray*}
    \Delta\omega_i &= \div\hat{\brho}_i \quad {\rm in}\,\,\, \Omega, \\
    \nabla\omega_i\cdot\n &= \hat{\brho}_i\cdot\n \quad {\rm on}\,\,\, \partial\Omega.
\end{eqnarray*}
This construction allows us to find a suitable projection of the loop fields to be divergence-free and everywhere tangent to the boundary.

\begin{theorem}
    The space $\mathcal{H}(\Omega)$ is finite dimensional with dimension $g$. A basis is given by $\brho_i = \hat{\brho}_i - \nabla\omega_i$, for $i=1,\dots, g$.
\end{theorem}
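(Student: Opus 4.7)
The plan is to prove the statement in three movements: (i) confirm that each $\brho_i$ really is an element of $\mathcal{H}(\Omega)$, (ii) establish linear independence, and (iii) show the $\brho_i$ span $\mathcal{H}(\Omega)$, which simultaneously pins down the dimension as $g$.

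First I would check that the defining Neumann problem for $\omega_i$ is well-posed. The compatibility condition requires $\int_\Omega \div \hat{\brho}_i \, \d^3x = \int_{\partial\Omega} \hat{\brho}_i \cdot \n \, \d^2x$, which holds by the divergence theorem, so $\omega_i$ exists in $H^1(\Omega)$ (unique up to an additive constant, which is irrelevant since only $\nabla \omega_i$ enters). Then I would verify the three membership conditions directly: $\curl \brho_i = \curl \hat{\brho}_i - \curl \nabla \omega_i = \boldsymbol{0}$ because $\hat{\brho}_i$ is a loop field; $\div \brho_i = \div \hat{\brho}_i - \Delta \omega_i = 0$ by the interior equation for $\omega_i$; and $\brho_i \cdot \n = \hat{\brho}_i \cdot \n - \nabla\omega_i \cdot \n = 0$ on $\partial\Omega$ by the boundary condition. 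Hence each $\brho_i \in \mathcal{H}(\Omega)$.

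For linear independence, I would compute the circulations along the cycles $\gamma_j$. Since $\omega_i$ is a single-valued scalar on $\Omega$, $\oint_{\gamma_j} \nabla \omega_i \cdot \t_j \, \d x = 0$, so
\begin{equation*}
\oint_{\gamma_j} \brho_i \cdot \t_j \, \d x = \oint_{\gamma_j} \hat{\brho}_i \cdot \t_j \, \d x = \delta_{ij}.
\end{equation*}
Linear independence then follows by exactly the argument used in Theorem \ref{lin_ind}: if $\sum_i \alpha_i \brho_i = \boldsymbol{0}$, integrating against $\t_j$ around $\gamma_j$ yields $\alpha_j = 0$ for each $j$. This also confirms the identity stated immediately before Section 3.

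The main obstacle is the spanning step, which is where the topology is really doing work. Let $\brho \in \mathcal{H}(\Omega)$ be arbitrary. Because $\curl \brho = \boldsymbol{0}$, de Rham's theorem tells us that $[\brho]$ in the first de Rham cohomology group is a real linear combination $\sum_{i=1}^g \alpha_i [\hat{\brho}_i]$ of the basis elements, so $\brho - \sum_i \alpha_i \hat{\brho}_i = \nabla \chi$ for some single-valued scalar $\chi \in H^1(\Omega)$. Substituting $\hat{\brho}_i = \brho_i + \nabla \omega_i$ gives
\begin{equation*}
\brho = \sum_{i=1}^g \alpha_i \brho_i + \nabla\eta, \qquad \eta := \chi + \sum_{i=1}^g \alpha_i \omega_i.
\end{equation*}
Since $\brho$ and each $\brho_i$ lie in $\mathcal{H}(\Omega)$, the remainder $\nabla \eta$ must also be divergence-free and satisfy $\nabla \eta \cdot \n = 0$ on $\partial\Omega$. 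Thus $\eta$ solves the homogeneous Neumann problem $\Delta \eta = 0$ in $\Omega$ with $\partial_\n \eta = 0$ on $\partial\Omega$, and an integration by parts gives $\int_\Omega |\nabla\eta|^2 \, \d^3 x = \int_{\partial\Omega} \eta \, \partial_\n \eta \, \d^2 x = 0$, so $\nabla\eta \equiv \boldsymbol{0}$ and $\brho = \sum_i \alpha_i \brho_i$. Combined with linear independence, this proves $\dim \mathcal{H}(\Omega) = g$ and that $\{\brho_i\}_{i=1}^g$ is a basis. The delicate point worth emphasizing is the invocation of de Rham: one needs that loop fields together with gradients exhaust curl-free fields on a multiply connected $\Omega$, which is precisely the content of the de Rham theorem cited earlier.
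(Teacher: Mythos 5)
Your proof is correct and follows essentially the same route as the paper: linear independence via the circulation identity inherited from the $\hat{\brho}_i$ (since the gradient correction has zero circulation), and spanning via de Rham's theorem followed by showing the residual gradient vanishes through the uniqueness of the Neumann problem. The extra details you supply --- well-posedness of the Neumann problem for $\omega_i$, explicit verification that $\brho_i\in\mathcal{H}(\Omega)$, and the energy argument for $\nabla\eta=\boldsymbol{0}$ --- are sound and merely make explicit what the paper leaves implicit.
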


\begin{proof}
    In Theorem \ref{lin_ind}, we showed that the $\hat{\brho}_i$ are linearly independent. Since $\brho_i$ and $\hat{\brho}_i$ differ only by a gradient, it is clear that the $\brho_i$ must also be linearly independent.

    Let $\brho\in\mathcal{H}(\Omega)$. Its equivalence class $[\brho]$ is an element of the first de Rham cohomology group, hence we can write

    \[
    [\brho] = \sum_i\alpha_i[\hat{\brho}_i], \quad \brho = \sum_i\alpha_i\hat{\brho}_i + \nabla\chi,
    \]
    where $\chi$ satisfies
    \begin{eqnarray*}
        \Delta\chi &= -\sum_i\alpha_i\div\hat{\brho}_i = -\sum_i\alpha_i\Delta\omega_i\quad {\rm in}\,\,\,\Omega, \\
        \nabla\chi\cdot\n &= -\sum_i\alpha_i\hat{\brho}_i\cdot\n = -\sum_i\alpha_i\nabla\omega_i\cdot\n\quad{\rm on}\,\,\, \partial\Omega.
    \end{eqnarray*}
    Hence,
    \[
    \nabla\chi = -\sum_i\alpha_i\nabla\omega_i,
    \]
    and so 
    \[
    \brho = \sum_i\alpha_i\brho_i.
    \]
\end{proof}
This procedure shows how the basis of $\mathcal{H}(\Omega)$ connects directly to fundamental results in algebraic topology and how the topology of the domains that we have considered for helicity calculations is encoded in $\mathcal{H}(\Omega)$.

\section{}\label{app_B}

In multiply connected domains,  finding ${\A}'$ such that $\curl \A' = \B'$ in $\Omega$ with ${\A}' \times {\n} = {\A} \times {\n}$ on $\partial \Omega$ needs the compatibility conditions
\begin{equation}\label{comp1}
\div \B' = 0 \ \hbox{\rm in} \ \Omega \ \ , \ \ \div_\tau({\A} \times {\n}) = \B' \cdot \n \ \ \hbox{\rm on} \  \partial \Omega,
\end{equation}
where $\div_\tau$ is the surface divergence on $\partial\Omega$, and

\begin{equation}\label{comp2}
\int_\Omega \B' \cdot \brho_i \, d^3 x + \int_{\partial \Omega} ({\A} \times {\n}) \cdot \brho_i \, \d^2 x = 0 , \ \ i=1,\ldots,g.
\end{equation}
Conditions (\ref{comp1}) are satisfied as
\[
\div_\tau({\A} \times {\n}) = \curl \A \cdot \n = \B \cdot \n = f = \B' \cdot \n \ \ \hbox{\rm on} \ \partial \Omega.
\]
The first term in (\ref{comp2}) reads

\begin{eqnarray}
\int_\Omega \B' \cdot \brho_i \, \d^3 x &=\int_\Omega \B' \cdot \widetilde \nabla \phi_i \, \d^3 x = \int_{\Omega \setminus \Sigma_i} \B' \cdot  \nabla \phi_i \, \d^3 x \nonumber\\ 
&= - \int_{\Omega \setminus \Sigma_i} (\div \B') \phi_i \, \d^3 x + \int_{\partial \Omega \setminus \partial \Sigma_i} \B' \cdot \n \, \phi_i \,\ d^2 x\nonumber\\
&\quad+ \int_{\Sigma_i} \B' \cdot \n_{\Sigma_i} \, \d^2 x \nonumber\\ 
& = \int_{\partial \Omega} f \, \phi_i \, \d^2 x + \int_{\Sigma_i} \B' \cdot \n_{\Sigma_i} \, \d^2 x.\label{term1} 
\end{eqnarray}
The second term in (\ref{comp2}) reads

\begin{eqnarray}
\int_{\partial\Omega} ({\A} \times {\n})\cdot \brho_i\, \d^2 x  &=  \int_{\partial \Omega} ({\A} \times {\n}) \cdot \widetilde \nabla \phi_i \, \d^2 x = \int_{\partial \Omega \setminus \gamma'_i} ({\A} \times {\n}) \cdot \nabla \phi_i \, \d^2 x \nonumber\\
&=-\int_{\partial \Omega \setminus \gamma'_i} \div_\tau({\A} \times {\n}) \, \phi_i \, \d^2x\nonumber\\
&\quad+ \oint_{\gamma'_i}  ({\A} \times {\n}) \cdot \n_{\Sigma_i}\, \d x \nonumber\\
&=-\int_{\partial \Omega} \curl {\A} \cdot {\n} \, \phi_i \, \d^2x  + \oint_{\gamma'_i}  (\n \times {\n}_{\Sigma_i}) \cdot \A\, \d x \nonumber\\
&= -\int_{\partial \Omega} {\B} \cdot {\n} \, \phi_i \, \d^2x  - \oint_{\gamma'_i}  {\bf t}_i \cdot \A\, \d x \nonumber\\
&= -\int_{\partial \Omega} f \, \phi_i \, \d^2x  - \int_{\Sigma_i}  \curl \A \cdot \n_{\Sigma_i}
\, \d^2x \nonumber\\ 
&= -\int_{\partial \Omega} f \, \phi_i \, \d^2x  - \int_{\Sigma_i}  \B \cdot \n_{\Sigma_i} \, \d^2x. \label{term2}
\end{eqnarray}
Inserting (\ref{term1}) and (\ref{term2}) into (\ref{comp2}) reveals that the necessary conditions to be satisfied in the choice of $\B'$, for $H^R_1$ in multiply connected domains, are
\[
\int_{\Sigma_i} \B' \cdot \n_{\Sigma_i} \, \d^2 x = \int_{\Sigma_i} \B \cdot \n_{\Sigma_i} \, \d^2 x, 
\]
for $ i=1,\ldots,g.$ 
\section*{References}

\bibliography{rel_hel_bib}

\end{document}